\newcommand\makemathcal[1]{\foreach\z in{#1}{\expandafter\gdef\csname\z\expandafter\endcsname\expandafter{\expandafter\mathcal\expandafter{\z}}}}
\newcommand{\RR}{\mathbb R}
\DeclareMathOperator*{\argmax}{arg\,max}
\let\oldforall\forall
\let\forall\undefined
\DeclareMathOperator{\forall}{\oldforall}
\crefname{equation}{}{}
\newtheorem{theorem}{Theorem}
\newtheorem{problem}{Problem}
\newtheorem{definition}{Definition}
\title{\LARGE \bf A Fast Algorithm for Robust Action Selection in Multi-Agent Systems}
\author{Jun Liu and Ryan K. Williams%
\thanks{This work was supported by the National Institute of Food and Agriculture under Grant 2018-67007-28380.}
\thanks{The authors are with the Department of Electrical and Computer Engineering, Virginia Polytechnic Institute and State University, Blacksburg, VA 24061 USA
(e-mail: \href{mailto:junliu@vt.edu}{junliu@vt.edu}; \href{mailto:rywilli1@vt.edu}{rywilli1@vt.edu}).}}
\begin{document}

\maketitle
\thispagestyle{empty}

%===============================================================================

\begin{abstract}
	In this paper, we consider a robust action selection problem in multi-agent systems where performance must be guaranteed when the system suffers a worst-case attack on its agents. Specifically, agents are tasked with selecting actions from a common ground set according to \emph{individualized} objective functions, and we aim to protect the system against attacks. In our problem formulation, attackers attempt to disrupt the system by removing an agent's contribution after knowing the system solution and thus can attack perfectly. To protect the multi-agent system against such attacks, we aim to maximize the minimum performance of all agents' individual objective functions under attacks. Thus, we propose a \emph{fast algorithm} with tunable parameters for balancing complexity and performance, yielding substantially improved time complexity and performance compared to recent methods. Finally, we provide Monte Carlo simulations to demonstrate the performance of the proposed algorithm.
\end{abstract}

%===============================================================================

\section{Introduction}

In multi-agent systems, action selection problems have been used in various applications, including task assignment \cite{williams2017decentralized}, path planning \cite{singh2009efficient,heintzman2021anticipatory}, coverage \cite{liu2021distributed}, environmental monitoring \cite{liu2020monitoring}, rigidity evaluation \cite{williams2014evaluating}, sensor selection \cite{liu2018optimal}, etc. In general, problems of this type are difficult combinatorial optimization problems, and recent work in this area has focused extensively on (non-)submodular optimization algorithms, which have proven tremendously useful as they offer performance guarantees coupled with efficient greedy selection.

In the general setting of submodular optimization, this paper focuses on a scenario where a team of agents can select actions from a common ground set to fulfill a system-level objective. By using multiple agents for a single collaborative task, the system's redundancy can be increased to deal with possible attacks or failures. For example, in a collaborative task assignment problem, a team of agents (e.g., robots) may need to move an object from one location to another by applying different skills. If one agent is attacked, and the attack is a worst-case (the best possible attack), we desire a system that can continue operation with minimal impact on performance. By worst-case attack, we refer to the case when the attacker knows the best selection of the system and tries to remove the contribution from the agent with the maximum contribution. Here, the contribution is quantified by each agent's individual objective function value. To protect against such attacks, we must ensure that the minimum contribution from any agent is maximized. In such a case, when the above attack occurs we can guarantee our system's performance. This paper will tackle the problem by constructing an appropriate submodular function from the system objective function and providing a fast and tunable algorithm for finding \emph{robust} action selections.

To provide context for our work and argue novelty, we now provide a review of relevant efforts in submodular optimization. Generally, submodular functions are a subset of set functions with a diminishing returns-like property. This property can be used to model many reward functions as the marginal reward of a system often decreases as the system makes more selections. For example, in the well-known sensor placement problem \cite{krause2014submodular}, the authors utilized the submodularity of the objective function, i.e., mutual information, and proposed a greedy method to solve the problem with guarantees. In the multi-agent task allocation problem \cite{williams2017decentralized}, the marginal reward of the system is generally reduced when more tasks are assigned. Similar applications in robotics can also be found in the coverage problem \cite{liu2021distributed}, orienteering problem \cite{jorgensen2017matroid}, interaction planning \cite{heintzman2020multi}, target tracking \cite{zhou2018resilient}, precision agriculture monitoring \cite{liu2021intermittent}, etc.

While submodular functions (and thus submodular optimization) have broad applications, it is also essential to ensure a system's optimization performance when attacked, removing some parts or the entire contribution from one or more agents. Such scenarios have attracted significant attention as of late, and several robust/resilient algorithms have been proposed to mitigate the impact of attacks. In \cite{tzoumas2018resilient}, the authors considered a general action selection (or task allocation) scenario. They proposed an algorithm to protect the system when an attacker knows the system's solution and can remove part of the system's solution under a partition matroid constraint. In \cite{liu2021distributed}, we proposed a distributed version of that algorithm that can be used in situations where no central point of command is available. The proposed algorithm has the same performance guarantees as to the centralized algorithm. In \cite{krause2007selecting}, the author considered the case where a multi-agent team must select actions from a common ground set but with different individual objective functions. This is also the scenario that we consider in this paper. To protect the system against attacks, the authors proposed an algorithm by utilizing the submodularity of a surrogate function instead of the original objective function. The proposed algorithm works for the cardinality constraint scenario, and the final performance is guaranteed when the constraint is relaxed. However, under some budget-critical circumstances, this relaxation cannot be allowed. In \cite{hou2021robust}, the authors investigated this problem under a special case where the marginal gains of different agents' objective functions have a relation. The proposed algorithm's performance is guaranteed and is a function of the proposed marginal gain ratio. In contrast, we investigate this problem without any additional assumptions and aim to increase the speed of calculating a solution, which is useful when it is time-consuming to evaluate objective functions, a typical case.

In summary, the contributions of this paper are as follows.
\begin{itemize}
	\item We propose a fast algorithm for solving the robust multi-agent action selection problem.
	\item We prove the proposed algorithm's performance and computational complexity.
	\item We demonstrate the performance of the proposed algorithm using an action selection application.
\end{itemize}

%===============================================================================

\section{Preliminaries and Problem Formulation}
\label{sec: preliminaries and problem formulation}

\subsection{Preliminaries}
\label{ssec: preliminaries}

As our problem will be formulated and solved with tools from combinatorial optimization, we begin by reviewing relevant concepts in that area.

A set function $f: 2^\V \mapsto \RR$ is a function that maps any subset of $\V$ into $\RR$, where $\V$ is the ground set. A submodular function is a subset of set functions with an additional property described as follows.

\begin{definition}[Marginal gain]
	For a set function $f: 2^\V \mapsto \RR$, let $f(e \mid \S) \triangleq f(\S \cup e) - f(\S)$ be the marginal gain of $e \in \V$ with respect to the set $\S$, where $\S \subseteq \V$.
\end{definition}

\begin{definition}[Submodularity \cite{schrijver2003combinatorial}]
	A submodular function $f: 2^\V \mapsto \RR$ is a set function with following property:
	\begin{equation*}
		f(\A \cup v) - f(\A) \ge f(\B \cup v) - f(\B),
	\end{equation*}
	where $\A, \B \subseteq \V, \A \subseteq \B$, and $v\footnote{For notation clarity, we use $e$ to denote $\{e\}$.} \in \V \setminus \B$.
\end{definition}

In addition, a function $f(\cdot)$ is non-decreasing if $f(\A) \le f(\B)$ when $\A \subseteq \B$. Usually, we can use non-decreasing submodular functions to model the utilities in robotics as this marginal gain decreasing property is a very common property in many applications \cite{williams2017decentralized,schrijver2003combinatorial}.

\begin{definition}[Curvature \cite{conforti1984submodular}]
	\label{def: curvature}
	Let $f: 2^\V \mapsto \RR$ be a monotone non-decreasing submodular function, the curvature of $f(\cdot)$ is defined as
	\begin{equation*}
		c_f \triangleq 1 - \min_{a \in \A} \frac{f(\V) - f\left(\V \setminus v\right)}{f(a)},
	\end{equation*}
	where $\A = \{ a \in \V \mid f(a) > 0\}$ and $\V$ is the ground set.
\end{definition}

Curvature measures the degree of the submodularity for a set function, which can be used as a metric when the submodularity of that function is explored \cite{conforti1984submodular}. For any function $f(\cdot)$, it holds that $c_f \in [0, 1]$.

\begin{definition}[\cite{oxley2006matroid}]
	A matroid $\M = (\V, \I)$ is a pair $(V, \I)$, where $\V$ is the ground set, and $\I$ is a collection of subsets of $\V$, with the following properties:
	\begin{enumerate}
		\item $\emptyset \in \I$;
		\item If $\A \subseteq \B \in \I$, then $\A \in \I$;
		\item If $\A, \B \in \I$ with $|\B| < |\A|$, there exists an element $v \in \A \setminus \B$ such that $\B \cup \{x\} \in \I$.
	\end{enumerate}
	\label{def: matroid}
\end{definition}

Matroids generalize the independence property from linear algebra to set systems, which we can use to model the independence relationship between different sets. We direct the reader to \cite{schrijver2003combinatorial} for more theoretical details. Also, examples of using matroids to model independence in robotics can be found in \cite{williams2017decentralized,wehbe2020optimizing,liu2019submodular}.

%===============================================================================

\subsection{Problem Formulation}
\label{ssec: problem formulation}

\emph{Agents, actions, and constraints:} Consider a  multi-agent system $\A$ with $|\A| = N$, and a common action set $\V$ with $|\V| = M$. Each agent $a \in \A$ can select one or more actions from the action set $\V$. There is a reward associated with an agent's action selection set $\S$, where $\S \subseteq \V$. We denote by $h_i: 2^{\V} \mapsto \RR$ a submodular function as agent $i$'s \emph{individual} objective function for all $i \in \A$. This setting happens when all $i \in \A$ need to cooperate to finish a system objective, and the system redundancy is increased when the system uses multiple agents. We use a matroid $\M = (\V, \I)$ to model the system constraint. For example, such a matroid constraint can be used to model the budget constraint for the system. e.g., $|\S \cap \V| \le z$, where $\S \subseteq \V$ is a problem solution and $z \in \RR$ is the system budget.  Importantly, we make no assumptions about the type of matroid constraint used.

\emph{Attack:} In this multi-agent system, where each agent $i \in \A$ can be deployed to fulfill our system objective, we want to make sure that the system's performance is guaranteed if an attacker attacks any agent by removing its contribution. In such a case, we want to ensure that the minimum performance of any individual agent's performance $h_i(\S)$ is maximized. That is, the minimum performance of the system, $\min_i h_i(\S)$, needs to be maximized. Considering this, we formulate the problem as follows.

\begin{problem}[Robust action selection in multi-agent systems]
Consider a multi-agent agent system $\A$ with $N$ agents selecting actions from a common action set $\V$. Each agent has an individual objective function $h_i: 2^\V \mapsto \RR$ for selecting actions from the action ground set $\V$. The system's constraint is a matroid constraint $\M = (\V, \I)$. The problem is formulated as
\begin{equation*}
	\begin{split}
		\underset{\S \subseteq \V}{\text{maximize}} \quad & \min_i h_i(\S),\\
		\text{subject to}\quad & \S \in \I.
	\end{split}
\end{equation*}
where $i=\{1, \ldots, N\}$, $|\V| =M$, and $\M = (\V, \I)$ is matroid constraint of the multi-agent system.
\label{prb: formulation}
\end{problem}

In this problem formulation, different agents $a \in \A$ need to select a common action set from $\V$ to fulfill a system objective to remain robust. For notation convenience, we denote by
\begin{equation*}
	g(\S)  = \min_i h_i(\S), \quad \forall \S \subseteq \V
\end{equation*}
as the system objective function, and refer to $h_i(\S)$ as the \emph{individual} objective function for agent $i$. The system needs to ensure a performance guarantee if an attacker attacks any agent when knowing the system's common solution $\S$. On the contrary, if we do not require that different agents select a common action set to optimize a system objective, the problem becomes $N$ independent optimization problems. An attacker needs to know more about the system when an attack happens as different agents hold different actions, and the corresponding contributions are different.

The problem formulation describes a game between the system and an attacker that can attack one agent. The system wants to select a common action set $\S$ for all agents to maximize any agent's performance, while an attacker wants to attack one of the agents' contributions in the worst case. Our mission is to ensure the system's performance if the attacker attacks the system in a worst-case scenario.

%===============================================================================

\section{A Fast Algorithm for Robust Selection}

In general, since the objective function $g(\S) = \min_i h_i(\cdot)$ is a set function and has no obvious property that we can utilize, we resort to the use of a surrogate function with a convenient property that can help service our maximizing purpose. Initially, this surrogate function idea is from \cite{krause2007selecting} when the constraint is relaxed to yield a provable performance bound. However, by exploring the surrogate function using the method proposed in this paper, we can significantly reduce the computational complexity \emph{without} relaxing the system constraint.

The surrogate function $f(\S)$ that can be used for replacing the objective function $g(\S) = \min_i h_i(\cdot)$ is as follows:
\begin{equation*}
	f(\S) \leftarrow \frac{1}{N} \sum_{i=1}^N \min \left\{h_i(\S), \gamma\right\},
\end{equation*}
where $N$ is the number of agents or the number of objective functions, and $\gamma$ is used as the upper bound for the associated reward of different agents. It can be proven that the surrogate function $f(\S)$ is submodular \cite{fujito2000approximation}.

The proposed fast algorithm for solving Problem \ref{prb: formulation} is shown in \cref{alg1}. First, we set an upper and lower bound for our surrogate function $f(\cdot)$. The lower bound $\ell$ is set to $0$, while the upper bound is set to $f(\V)$ as shown in \cref{lin: 1}. The upper bound equals the function value when all actions are selected, which is the maximum performance that the system can obtain.

Since the surrogate function $f(\cdot)$ is a monotone non-decreasing function, we can use a binary search method to reduce the gap between the upper bound $u$ and the lower bound $\ell$. First, we set the median of our surrogate function $f(\cdot)$ as $\gamma \leftarrow u +\ell$ as shown in \cref{lin: median}. Then, we use this median $\gamma$ as a temporary upper bound of the individual objective function $h_i(\S)$ for different agents. Meanwhile, the corresponding surrogate function $f(\cdot)$ needs to be updated as $f(\S) \leftarrow \frac{1}{N} \sum_{i=1}^N \min \left\{h_i(\S), \gamma\right\}$. Based on the upper bound $\gamma$, we will then need to find an action set $\S$ that approximately maximizes the surrogate function $f(\cdot)$ as maximizing a submodular function $f(\cdot)$ with a matroid constraint is NP-hard \cite{schrijver2003combinatorial}.

In \cref{lin: for 1} to \cref{lin: for 2}, we depict a fast method for maximizing a submodular function under a matroid constraint. Conventionally, maximizing a submodular function is achieved through a simple greedy method \cite{fisher1978analysis},
\begin{equation*}
	\S \leftarrow \S \cup \left\{ \argmax_{\S \cup \{e\} \in \I} f(e \mid \S) \right\}, \quad \forall e \in \V \setminus \S.
\end{equation*}
This method aims to build a solution set $\S$ by adding the element $e \in \V \setminus \S$ with the maximum marginal gain in each iteration. The computational complexity of this method is $\O(M^2)$ when the size of the action ground set is $|\V| = M$. However, when the size of the ground set is large, or it is time-consuming to evaluate the function $f(\cdot)$, we need a faster way to generate the final solution set. In general, our method can be summarized as follows:
\begin{equation*}
	\S \leftarrow \S \cup \left\{e \mid f(e \mid \S) \ge \Delta, \S \cup e \in \I\right\}, \quad \forall e \in \V \setminus \S.
\end{equation*}
The parameter $\Delta$ is a lower bound for deciding whether to add $e \in \V \setminus \S$ to the solution $\S$ or not based on the current marginal gain $f(e \mid \S)$. Note that the basic problem constraint $\S \cup \{e\} \in \I$ should also be followed in each iteration.

Specifically, from \cref{lin: for 1} to \cref{lin: for 2}, we first set a lower bound for the marginal gain $f(e \mid \S)$ based on the current system solution $\S$. If both this lower bound and the system matroid constraint $\M = (\V, \I)$ are satisfied, we can add this $e$ to the current solution set $\S$ as shown in \cref{lin: for update}. After all available $e \in \V \setminus \S$ are added to $\S$, we then update the marginal gain lower bound $\Delta$ as
\begin{equation*}
	\Delta \leftarrow \frac{\Delta}{1+\delta}.
\end{equation*}
This iterative process terminates when all $e \in \V$ are selected or when the lower bound of the marginal gain $f(e \mid \S)$ achieves its predefined minimum value $\delta F$. Using this method, we do not need to calculate the maximum value in each iteration. Instead, we can add all elements that satisfy our requirements into the current solution set $\S$, making the algorithm faster. When compared with the conventional greedy maximization method, which has a complexity of $\O(M^2)$, the method (from \cref{lin: for 1} to \cref{lin: for 2}) has a worst case complexity of $\O(M \log(\delta^{-1}))$.

Finally, we need to check the surrogate function value $f(\S)$ against our predefined lower bound $\frac{\gamma}{1 + c_f + \delta}$, where $c_f$ is the curvature of $f(\cdot)$. Then, we use a binary search method to update those two bounds. If the function value $f(\S)$ is lower than the predefined bound, we update the upper bound using $\gamma$. On the contrary, if the objective function value $f(\S)$ is larger than the predefined bound $\frac{\gamma}{1 + c_f + \delta}$, we need to update the lower bound $\ell$ using $\gamma$, and store the current solution $\S$ as the best solution $\S^{\text{G}}$ as shown in \cref{lin: update S}.

\begin{algorithm}[!t]
	\caption{A Fast Algorithm for Robust Selection}
	\label{alg1}
	\textbf{Input:} The inputs are as follows:
	\begin{itemize}
		\item The individual objective function $h_i(\cdot)$;
		\item The action ground set $\V$.
	\end{itemize}

	\textbf{Output:} Set $\S^{\text{G}}$.

	\begin{algorithmic}[1]
		\Statex
		\State $\ell \leftarrow 0, u \leftarrow \min_i h_i(\V)$; \label{lin: 1}
		\While{$|u-\ell| > \epsilon$}
		\State $\S \leftarrow \emptyset$;
		\State $\gamma \leftarrow \frac{1}{2}(u + \ell)$; \label{lin: median}
		\State $F \leftarrow \max_{i \in \V} f(i)$;
		\State
		\For{$\Delta = F, e \in \V \setminus \S; \Delta \ge \delta F; \Delta \leftarrow \frac{\Delta}{1+\delta}$} \label{lin: for 1}
		\State $\S \leftarrow \S \cup \{e \mid f(e \mid \S) \ge \Delta, \S \cup e \in \I\}$;\label{lin: for update}
		\EndFor \label{lin: for 2}
		\State
		\If{$f(\S) < \frac{\gamma}{1 + c_f + \delta} $} \label{lin: update 1}
		\State $u \leftarrow \gamma$;
		\Else
		\State  $\ell \leftarrow \gamma, \S^{\text{G}} \leftarrow \S$; \label{lin: update S}
		\EndIf \label{lin: update 2}
		\EndWhile
		\State \textbf{return} $\S^{\text{G}}$.
	\end{algorithmic}
\end{algorithm}

\begin{theorem}
	\label{thm: relation}
	For any fixed $\gamma$, let $\S^\star$ be an optimal solution for maximizing the surrogate function $f(\cdot)$, and $\S_m$ be the solution in the $m$th iteration in the for loop as shown in \cref{lin: for update}, then we have the following
	\begin{equation*}
		f(e \mid \S_m) \ge \frac{1}{1 + \delta} f(o \mid \S_m), \quad \forall o \in \S^\star \setminus \S_m, e \in \V \setminus \S_m.
	\end{equation*}
\end{theorem}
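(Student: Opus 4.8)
My reading of the statement is that $e$ is an element \emph{actually added} at the $m$th execution of \cref{lin: for update} — the only case in which the left-hand side is meaningfully controlled — so that $e$ passed the selection test there; writing $\Delta_m$ for the threshold value active at that iteration, this gives $f(e\mid\S_m)\ge\Delta_m$. The plan is then to reduce everything to the single inequality $f(o\mid\S_m)\le(1+\delta)\Delta_m$ for every $o\in\S^\star\setminus\S_m$, since together with $f(e\mid\S_m)\ge\Delta_m$ it yields $f(o\mid\S_m)\le(1+\delta)\Delta_m\le(1+\delta)f(e\mid\S_m)$, which is exactly the asserted bound.

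To establish $f(o\mid\S_m)\le(1+\delta)\Delta_m$ I would split on whether iteration $m$ falls in the first pass of the outer \textbf{for} loop (one full sweep over $\V\setminus\S$ at a fixed threshold). If it does, then $\Delta_m=F=\max_{i\in\V}f(i)$, and submodularity of $f$ gives $f(o\mid\S_m)\le f(o\mid\emptyset)\le f(o)\le F=\Delta_m$, which is even stronger than needed. If iteration $m$ falls in a later pass, then the pass immediately preceding the current one used threshold $(1+\delta)\Delta_m$; since $o\notin\S_m$ and the partial solution only grows, $o$ was still a candidate throughout that earlier pass and was examined there without being added. Letting $\S'$ denote the partial solution at the instant $o$ was examined (so $\S'\subseteq\S_m$), if that rejection was caused by the marginal-gain test we get $f(o\mid\S')<(1+\delta)\Delta_m$, and then by submodularity (since $\S'\subseteq\S_m$) $f(o\mid\S_m)\le f(o\mid\S')<(1+\delta)\Delta_m$, as required.

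The one genuinely delicate case is when $o$ was turned away in that earlier pass not by the threshold but because $\S'\cup o\notin\I$; here no bound on $f(o\mid\S_m)$ drops out, and this is precisely where the matroid enters. I would handle it by noting that $\S'\cup o\notin\I$ with $\S'\subseteq\S_m$ forces $\S_m\cup o\notin\I$ as well (a circuit contained in $\S'\cup o$ persists in the larger set), and that in the downstream performance analysis \cref{thm: relation} is invoked only for pairs $(o,e)$ produced by a matroid basis-exchange between $\S^\star$ and the greedy solution, for which $\S_m\cup o\in\I$ holds by construction — so the problematic case never arises in use; equivalently, one may simply prove the lemma under the standing hypothesis $\S_m\cup o\in\I$. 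Apart from this, the main thing needing care is the bookkeeping: fixing precisely what ``$\S_m$'' and ``the $m$th iteration'' mean given the two-index loop, and verifying that $o$ really is re-examined while the active threshold is still $(1+\delta)\Delta_m$ (or smaller) rather than skipped; each of the actual inequalities is then a one-line appeal to submodularity.
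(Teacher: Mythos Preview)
Your argument is essentially the paper's: the added element $e$ meets the current threshold $\Delta$, any unselected $o$ failed the previous threshold $(1+\delta)\Delta$, and submodularity bridges the gap; the paper's two cases (first vs.\ subsequent element at a fixed threshold) and your two cases (first pass vs.\ later pass) are just different ways of organizing the same bookkeeping. Your treatment is in fact more careful than the paper's in two places it glosses over: the paper silently assumes a ``previous'' threshold $\Delta_{n-1}$ exists (your first-pass case with $\Delta=F$ patches this), and it never confronts the possibility that $o$ was rejected by the matroid test rather than the marginal-gain test --- your observation that the downstream use in \cref{thm: 2} pairs $o$ with $e$ via a basis exchange (so $\S_m\cup o\in\I$) is the right fix, and is exactly the standing hypothesis under which the lemma should be read.
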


\begin{proof}
	Denote by $\Delta_n$ the lower bound of the marginal gain in the $n$th for loop as shown in \cref{lin: for update}. Note that it is possible that $n \neq m$ as for any $\Delta_n$ it is possible that $\S$ is updated multiple times. We also denote by $\S_{m-1}$ the solution before the updating using the current lower bound $\Delta_n$. For any $e \in \V \setminus \S_{m-1}$ that can be added to the approximation solution as $\S_m \leftarrow \S_{m-1} \cup e$, there are two cases to be considered: (1). $e$ is the first selected element that satisfies the above two requirements; (2). $e$ is the element other than the first one that satisfies the above two requirements.

	In the first case, for any $e \in \V \setminus \S_{m-1}$ that can be added to $\S_{m-1}$, it should satisfy the following condition
	\begin{equation}
		f(e \mid \S_{m-1}) \ge \Delta_n, \quad \forall e \in \V \setminus \S_{m-1}.
		\label{eq: 11}
	\end{equation}
	This is the necessary condition for $e \in \V \setminus \S_{m-1}$ to be considered as a candidate for adding to $\S_{m-1}$, as shown in \cref{lin: for update}. Now, for any $o \in \S^\star \setminus \S_{m-1}$, we have
	\begin{equation*}
		f(o \mid \S_{m-1}) \le \Delta_{n-1}, \quad \forall o \in \S^\star \setminus \S_{m-1}.
	\end{equation*}
	That is because if $o \in \S_{m-1}$, we have $f(o \mid \S_{m-1}) = 0$. Then, the above statement is true. If $o \notin \S_{m-1}$, it means $f(o \mid \S_{m-1})$ cannot satisfy the lower bound condition $\Delta_{n-1}$ in the $(j-1)$th iteration. Also, we have the relationship between lower bounds in different iterations as $\Delta_n \leftarrow \frac{\Delta_{n-1}}{1+\delta}$. We then have
	\begin{equation}
		f(o \mid \S_{m-1}) \le \Delta_{n-1} =  (1 + \delta) \Delta_n, \quad o \in \S^\star \setminus \S_{m-1}.
		\label{eq: 12}
	\end{equation}
	Combining \cref{eq: 11} and \cref{eq: 12}, for any $o \in \S^\star \setminus \S_{m-1}$, we have the result for the first case.
	\begin{equation*}
		f(e \mid \S_{m-1}) \ge \frac{1}{1 + \delta} f(o \mid \S_{m-1}), \quad \forall e \in \V \setminus \S_{m-1}.
	\end{equation*}

	In the second case, we consider the case where $e$ is not the first element that satisfies the current marginal gain lower bound $\Delta_n$. Denote by $\S_m$ the solution after the first eligible element $e$ added to $\S$. Without loss of generality, we assume $e \in \V \setminus \S_m$. Following the same reasoning above, for any $o \in \S^\star \setminus \S_m$, we then have
	\begin{equation}
		f(e \mid \S_m) \ge \frac{1}{1 + \delta} f(o \mid \S_{m-1}), \quad \forall e \in \V \setminus \S_m.
		\label{eq: 21}
	\end{equation}
	At the same time, by utilizing the submodularity of the objective function $f(\cdot)$, we have
	\begin{equation}
		f(o \mid \S_{m-1}) \ge f(o \mid \S_m).
		\label{eq: 22}
	\end{equation}
	Combining \cref{eq: 21} and \cref{eq: 22}, for any $o \in \S^\star \setminus \S_m$, we have the following result for the second case.
	\begin{equation*}
		f(e \mid \S_m) \ge \frac{1}{1 + \delta} f(o \mid \S_m), \quad \forall e \in \V \setminus \S_m.
	\end{equation*}

	Finally, we can complete the proof by reduction based on the above two cases.
\end{proof}

\begin{theorem}
	Let $c_f$ be the curvature of function $f(\cdot)$. For any fixed $\gamma$, let $\S^\star$ be an optimal solution for maximizing the surrogate function $f(\cdot)$ and $\S$ be the generated output using the approximation method (\cref{lin: for 1} to \cref{lin: for 2}), we then have the following.
	\begin{equation*}
		f(\S) \ge \frac{1}{1 + c_f + \delta} f(\S^\star).
	\end{equation*}
	\label{thm: 2}
\end{theorem}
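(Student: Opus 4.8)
The plan is to adapt the classical curvature-based analysis of greedy maximization over a matroid (in the style of \cite{conforti1984submodular}), replacing every ``exact greedy'' inequality by the $(1+\delta)$-approximate one supplied by \cref{thm: relation}. I would work with $f$ the surrogate built from the fixed $\gamma$; recall it is monotone, submodular and normalized, so $f(\emptyset)=0$ and $f$ is subadditive. First I would list $\S=\{e_1,\dots,e_k\}$ in the order the loop \cref{lin: for 1} to \cref{lin: for 2} inserts the elements, set $\S_j=\{e_1,\dots,e_j\}$, and take (without loss of generality) $\S$ and $\S^\star$ to be bases of $\M$: for $\S^\star$ by monotonicity (extend an optimum to a basis; it stays feasible and optimal), and for $\S$ because the loop terminates at a maximal independent set. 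Then, invoking the symmetric basis-exchange theorem for matroids (see \cite{oxley2006matroid,schrijver2003combinatorial}), I would fix a bijection $\pi\colon\S\to\S^\star$ with $\pi(e)=e$ on $\S\cap\S^\star$ and $(\S\setminus e_j)\cup\{\pi(e_j)\}\in\I$ for all $j$, and set $o_j=\pi(e_j)$. Since $\S_{j-1}\subseteq\S\setminus\{e_j\}$ and $\I$ is downward closed, $\S_{j-1}\cup\{o_j\}\in\I$, so $o_j$ is a legal candidate at the iteration selecting $e_j$, and $o_j\notin\S_{j-1}$ (otherwise $o_j\in\S\cap\S^\star$ is a fixed point of $\pi$, forcing $o_j=e_j\notin\S_{j-1}$); \cref{thm: relation}, applied at that iteration with $\S_m=\S_{j-1}$, then gives $f(o_j\mid\S_{j-1})\le(1+\delta)f(e_j\mid\S_{j-1})$ for every $j$.

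Next I would telescope $f(\S^\star)-f(\S)$ through the hybrid sets $H_j:=\S_j\cup\{o_{j+1},\dots,o_k\}$, which interpolate from $H_0=\S^\star$ to $H_k=\S$. Writing $T_j:=\S_j\cup\{o_{j+2},\dots,o_k\}$, one has $H_j=T_j\cup\{o_{j+1}\}$ and $H_{j+1}=T_j\cup\{e_{j+1}\}$ with $o_{j+1},e_{j+1}\notin T_j$ (for $e_{j+1}$: if it lies in $\S^\star$ it is a fixed point of $\pi$, hence equals $o_{j+1}$, hence is not among $o_{j+2},\dots,o_k$), so each step is $f(H_j)-f(H_{j+1})=f(o_{j+1}\mid T_j)-f(e_{j+1}\mid T_j)$. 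I would then bound the two terms separately: $f(o_{j+1}\mid T_j)\le f(o_{j+1}\mid\S_j)\le(1+\delta)f(e_{j+1}\mid\S_j)$ by submodularity and the previous paragraph, and $f(e_{j+1}\mid T_j)\ge f(e_{j+1}\mid\V\setminus\{e_{j+1}\})\ge(1-c_f)f(e_{j+1})$ by submodularity and \cref{def: curvature}. Summing over $j=0,\dots,k-1$ yields $f(\S^\star)-f(\S)\le(1+\delta)\sum_{j=1}^{k}f(e_j\mid\S_{j-1})-(1-c_f)\sum_{j=1}^{k}f(e_j)$. Finally, the first sum telescopes to $f(\S)$ (using $f(\emptyset)=0$) and the second is $\ge f(\S)$ by subadditivity, so $f(\S^\star)-f(\S)\le(1+\delta)f(\S)-(1-c_f)f(\S)=(c_f+\delta)f(\S)$, which rearranges to $f(\S)\ge\frac{1}{1+c_f+\delta}f(\S^\star)$.

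The part I expect to be most delicate is the scaffolding in the first paragraph --- specifically the claim that $\S$ may be treated as a basis, even though the loop halts once the threshold drops below $\delta F$ and so can in principle return a non-maximal set. One has to check that every feasible element the loop omits has marginal gain small enough (below roughly $\delta F$) that completing $\S$ to a basis does not spoil the stated bound, i.e.\ that this slack is already absorbed by the $\delta$ in $1+c_f+\delta$. Everything else is routine bookkeeping: confirming for each $j$ that $o_{j+1}$ is simultaneously a feasible greedy candidate at step $j+1$ and absent from $T_j$, so \cref{thm: relation} and the curvature inequality both apply with the intended sets, and then chaining submodularity, \cref{thm: relation}, the curvature bound, and subadditivity of $f$.
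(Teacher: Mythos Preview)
Your argument is correct and complete, modulo the basis issue you already flag (and which the paper does not address either). The route, however, differs from the paper's. The paper sandwiches $f(\S^\star\cup\S)$: it upper-bounds $f(\S^\star\cup\S)$ by $f(\S)+\sum_{o\in\S^\star\setminus\S}f(o\mid\S)$ via submodularity, pushes each summand down to $(1+\delta)f(e_m\mid\S_{m-1})$ via \cref{thm: relation}, and separately lower-bounds $f(\S^\star\cup\S)$ by $f(\S^\star)+(1-c_f)\sum_m f(e_m\mid\S_{m-1})$ via the curvature definition; combining the two sides yields the ratio. You instead never look at $\S^\star\cup\S$: you invoke the symmetric basis-exchange property of $\M$ to produce an explicit bijection $\pi$, then telescope $f(\S^\star)-f(\S)$ along the hybrid chain $H_j=\S_j\cup\{o_{j+1},\dots,o_k\}$ and bound each step pointwise. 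What your approach buys is that the role of the matroid is made fully explicit --- the exchange bijection is exactly what licenses pairing each greedy pick $e_j$ with a specific optimal element $o_j$ that is feasible at step $j$, a pairing the paper's proof leaves implicit when it writes $\sum_{i=1}^{K}$ without saying how the index $i$ is matched to the step $m$. The paper's sandwich argument is shorter and closer to the original Conforti--Cornu\'ejols template, but your hybrid argument is tighter bookkeeping and would generalize more cleanly if one wanted, say, a per-element curvature refinement.
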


\begin{proof}
	First, by utilizing the submodularity of function $f(\cdot)$, we have
	\begin{equation*}
		f(\S^\star \cup \S) \le f(\S) + \sum_{o \in \S^\star \setminus \S} f(o \mid \S).
	\end{equation*}
	Also, we assume that $|\S^\star| = K$. Then, we have
	\begin{equation*}
		f(\S^\star \cup \S) \le f(\S) + \sum_{i=1}^K f(o \mid \S_{m-1}), \forall o \in \S^\star \setminus \S_{m-1}.
	\end{equation*}
	By using the result from \cref{thm: relation}, we have
	\begin{equation}
	\label{eq: alg 1}
		\begin{split}
			& f(\S^\star \cup \S) \\
			\le & f(\S) + (1+\delta) \sum_{i=1}^K f(e \mid \S_{m-1}), \forall e \in \V \setminus \S_{m-1} \\
			\le & f(\S) + (1+\delta) \sum_{i=1}^K f(\S_m \mid \S_{m-1}).
		\end{split}
	\end{equation}

	At the same time, for the element $e \in \S_m \setminus \S_{m-1}$, we have
	\begin{equation*}
	\begin{split}
		& f(\S^\star \cup \S) \\
		= &  f(\S^\star) + \sum_{i=1}^K \left[f(\S^\star \cup \S_m) - f(\S^\star \cup \S_{m-1})\right] \\
		= &  f(\S^\star) + \sum_{i=1}^K \left[f(\S^\star \cup \S_{m-1} \cup e) - f(\S^\star \cup \S_{m-1})\right].
	\end{split}
	\end{equation*}
	By utilizing the definition of curvature for submodular functions, as shown in \cref{def: curvature}, we then have
	\begin{equation*}
		\begin{split}
			& f(\S^\star \cup \S_{m-1} \cup e) - f(\S^\star \cup \S_{m-1}) \\
			\ge & f(\S_{m-1} \cup e) - f(\S^\star \cup \S_{m-1}).
		\end{split}
	\end{equation*}
	By combining the above two results, we get
	\begin{equation}
	\label{eq: alg 2}
		f(\S^\star \cup \S) \ge \sum_{i=1}^K \left[f(\S_{m-1} \cup e) - f(\cup \S_{m-1})\right]
	\end{equation}

	Finally, combining \cref{eq: alg 1} and \cref{eq: alg 2}, we get the final result.
	\begin{equation*}
		f(\S) \ge \frac{1}{1 + c_f + \delta} f(\S^\star).
	\end{equation*}
\end{proof}

\begin{theorem}
    In \cref{alg1}, let $\S^{\text{G}}$ be the solution and $\S^\star$ be an corresponding optimal solution for the same $\gamma$. When the algorithm terminates, we then have the following.
    \begin{equation*}
        \min_i h_i(\S^{\text{G}}) \ge \frac{1}{1 + c_f + \delta} \min_i h_i(\S^\star).
    \end{equation*}
\end{theorem}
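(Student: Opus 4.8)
The plan is to combine the termination behaviour of the binary search with the per-$\gamma$ guarantee of \cref{thm: 2}, bridged by the truncation structure of the surrogate. Write $\alpha \triangleq \tfrac{1}{1+c_f+\delta}$ and let $\gamma$ be the value of the truncation parameter in the last iteration in which \cref{lin: update S} executes, so that $\S^{\text{G}}$ is the set built by the for loop (\cref{lin: for 1} to \cref{lin: for 2}) of that iteration and, by the test it passed on \cref{lin: update 1}, $f(\S^{\text{G}})\ge\alpha\gamma$; here $f(\S)=\tfrac1N\sum_i\min\{h_i(\S),\gamma\}$ and $\S^\star$ is an optimizer of $\max_{\S\in\I}f(\S)$ for this $\gamma$. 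I would first record the elementary truncation facts: $f(\S)\le\gamma$ for every $\S$; $f(\S)\ge\min\{\min_i h_i(\S),\gamma\}$ since each summand dominates this quantity; and, combining these, $f(\S)=\gamma$ exactly when $\min_i h_i(\S)\ge\gamma$.

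Next I would bound $\min_i h_i(\S^\star)$ from above using that the search stopped with $|u-\ell|\le\epsilon$. The value of $u$ at termination is at most $\gamma+\epsilon$, and whatever last set $u$ certifies $\max_{\S\in\I}\min_i h_i(\S)\le u$: if that step rejected a level $\gamma_u\approx u$, then by \cref{thm: 2} the for loop's output there had surrogate value below $\alpha\gamma_u$, so the surrogate optimum at $\gamma_u$ was strictly below $\gamma_u$, which by the truncation facts means no feasible set has all its individual objectives at least $\gamma_u$; and if $u$ is still its initial value $\min_i h_i(\V)$, monotonicity of the $h_i$ gives the same inequality. Hence $\min_i h_i(\S^\star)\le u\le\gamma+\epsilon$, so that $\alpha\,\min_i h_i(\S^\star)\le\alpha\gamma+\alpha\epsilon$. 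The theorem therefore reduces to showing $\min_i h_i(\S^{\text{G}})\ge\alpha\gamma$ (up to the $\epsilon$-slack, which is then removed by letting $\epsilon\to0$).

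This reduction is the crux and the step I expect to be the main obstacle. When $\min_i h_i(\S^{\text{G}})\ge\gamma$ it is immediate, so the whole difficulty lies in the regime $\min_i h_i(\S^{\text{G}})<\gamma$: there the test gives only that the \emph{average} $\tfrac1N\sum_i\min\{h_i(\S^{\text{G}}),\gamma\}$ is at least $\alpha\gamma$, which does not by itself force the \emph{smallest} coordinate $h_j(\S^{\text{G}})$ up to $\alpha\gamma$ --- an average of truncated values can be large while one term is small. To close this I would exploit once more that $\gamma$ is the terminal (hence essentially maximal) accepted level, together with monotonicity of $f(\cdot)$ in $\gamma$: if some $h_j(\S^{\text{G}})$ were well below $\alpha\gamma$, I would show that lowering $\gamma$ slightly still leaves the for loop's output passing the test on \cref{lin: update 1}, so the binary search could not have terminated; carrying this out quantitatively means tracking the $\epsilon$-slack of the search and the $(1+\delta)$-slack of the threshold-greedy together, and this estimate is where the real work of the proof concentrates.
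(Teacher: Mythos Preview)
The paper's argument is entirely different from, and much shorter than, yours: it never touches the binary search, the width $\epsilon$, or any interaction across iterations. It simply invokes \cref{thm: 2} at the terminal $\gamma$ to get $f(\S^{\text{G}})\ge\alpha f(\S^\star)$, then asserts from the averaging form of $f$ that ``both $h_i(\S)$ and $\gamma$ should be larger than $f(\S^\star)$,'' and stops.

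You have correctly located the real obstruction: a lower bound on the \emph{average} $f(\S^{\text{G}})=\tfrac1N\sum_i\min\{h_i(\S^{\text{G}}),\gamma\}$ does not control the \emph{minimum} coordinate $\min_i h_i(\S^{\text{G}})$. The paper's one-line inference simply asserts past this --- the claim that each $h_i(\S^{\text{G}})$ dominates $f(\S^\star)$ does not follow from $f(\S^{\text{G}})\ge\alpha f(\S^\star)$ (with $N=2$, one coordinate $0$ and the other $\gamma$, the average is $\gamma/2$ while the minimum is $0$). So the paper does not supply the missing idea you are searching for in your last paragraph; your binary-search detour is already more careful than the published proof, and the step you flag as ``where the real work concentrates'' is precisely where both arguments stall. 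Your proposed contradiction-by-lowering-$\gamma$ will not close it either: lowering $\gamma$ changes both the surrogate and the threshold-greedy output, so passing the test at a smaller level would just overwrite $\S^{\text{G}}$ rather than contradict termination.
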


\begin{proof}
	From \cref{thm: 2}, for any fixed $\gamma$, we know that $f(\S) \ge \frac{1}{1 + c_f + \delta} f(\S^\star)$. Also, since $f(\S) \leftarrow \frac{1}{N} \sum_{i=1}^N \min \left\{h_i(\S), \gamma\right\}$, we then have that both $h_i(\S)$ and $\gamma$ should be larger than $f(\S^\star)$ according to the definition of $f(\cdot)$. We then have the above result.
\end{proof}

%===============================================================================

\section{Evaluation}

To evaluate the performance of the proposed algorithm, we use an action section problem as a study case. Specifically, we use a sensor proximity maximization problem \cite{hou2021robust}. The problem settings are as follows. The agents and the sensors are located in $\RR^2$ with a size of $100 \times 100$. The locations of both the agents and the actions are randomly chosen. The idea of sensor proximity is to use the sensing radius instead of the covered area to calculate the effectiveness of the selected sensor. Based on the locations of the agents, we need to select the locations of sensors to maximize our objective function. In this context, we refer to sensors as actions, and agents need to select a set of actions from the action ground set. This problem can be viewed as a specific example of an action selection problem.

We denote by $\A$ and $\V$ the agent set and the action ground set. The objective function can be formulated as
\begin{equation*}
	h_i(\S) = \max_{j \in \S} d_{ij},
\end{equation*}
where $i \in \A$ is the $i$th agent, and $j \in \S$ is the $j$th action. The 2D distance function $d: \RR^2 \times \RR^2 \mapsto \RR$ is a Euclidean distance function in $\RR^2$. Given any selected action set $\S \subseteq \V$, the objective value of $i$th agent is defined as the maximum distance between the agent $i$ and all the actions in the set $\S$. Meanwhile, this $100 \times 100$ region is equally divided into $4$ sub-regions. There is an upper limit on the number of actions that can be selected by each agent in each sub-region. This upper limit serves as a budget for the system in each sub-region and can be formulated as a matroid constraint $\M = (\V, \I)$. Thus, the problem formulation is
\begin{equation*}
	\begin{split}
		\underset{\S \subseteq \V}{\text{maximize}} \quad & \min_i h_i(\S),\\
		\text{subject to}\quad & \S \in \I.
	\end{split}
\end{equation*}
where $h_i(\S) = \max_{j \in \S} d_{ij}$ is the individual objective function for agent $i$. Specifically, the matroid constraint $\M = (\V, \I)$ of this problem is a partition matroid \cite{schrijver2003combinatorial} and can be written as
\begin{equation*}
	\I: |\S \cap \V_j| \le z, \quad j = 1, \ldots, 4,
\end{equation*}
where $\V_j$ is the sub-ground set of the problem associated with the $j$th sub-region, and $\bigcup_{j=1}^4 \V_j = \V$. Note that the total number of sub-regions is $4$.

\begin{figure}[!tbp]
	\centering
	\includegraphics[width=3in]{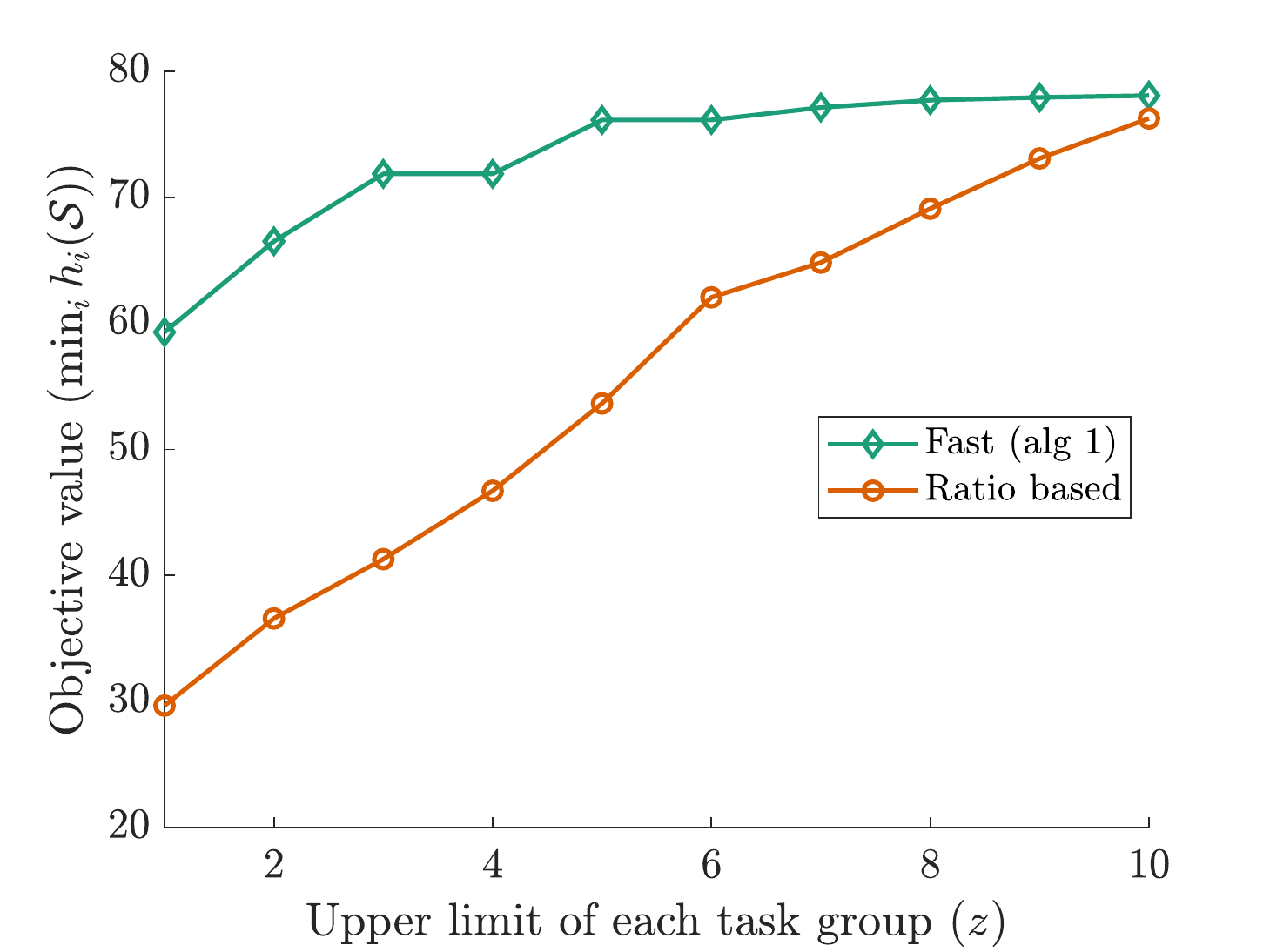}
	\caption{The averaged objective value comparison between the proposed algorithm: ``Fast (alg 1)'' and the ratio based algorithm. Each averaged objective value is calculated based on $100$ trials.}
	\label{fig: value}
\end{figure}

Next, we conduct Monte Carlo simulations to test the performance. Specifically, the number of agents is set to $|\A| = 5$ and the number of actions is set to $|\V| = 50$. The locations of agents and actions are randomly located. Therefore, the number of actions in each sub-region is not necessarily equal. That is, it is likely that $|\V_j| \neq |\V_{j'}|$ if $j \neq j'$ and $j, j' = 1, \ldots, 4$. In the simulation, we change the upper limit $z$ from $1$ to $10$. For each settled upper limit, we then run the simulation $100$ times, where the locations of both the agents and the actions are randomly generated for each simulation. The curvature of $f(\cdot)$ is set to $1$, which is the upper bound. The tunable parameter $\delta$ is set to $10^{-3}$. To test the performance, we compare the performance of the proposed \cref{alg1} with the current state-of-the-art algorithm from \cite{hou2021robust}. In that algorithm, the final solution is generated based on the ratios of the contributions of different actions with respect to the maximum contribution of each agent's available selection. We refer to this method as the ``Ratio based'' method in the comparison. Meanwhile, we refer to the algorithm proposed in this paper as ``Fast (alg 1)''. Then, we compare the performance and the speed of the two different methods.

Since we run the simulation $100$ trials for each random setting from $z = 1$ to $z = 10$, we calculate the averaged objective value to test the performance. From different $z$'s, we get the averaged objective function value comparison, which is shown in \cref{fig: value}. The result shows that the proposed algorithm performs much better when the limit $z$ is set to lower values. Then, when the limit $z$ approaches $10$, we get almost identical performance.

Next, we compare the speed of the proposed method with the ratio-based method. Specifically, we compare the averaged number of objective function evaluations. For each upper limit setting $z$, we calculate the number of evaluations of $f(\cdot)$. Then, we have the speed comparison as shown in \cref{fig: num}. This result indicates that the proposed fast method is significantly faster than the ratio-based method, even when $z$ is set to a low value. Then, when we increase the upper limit $z$, we observe that the difference between the two speeds becomes more apparent. Meanwhile, when compared with the proposed method, we notice that speed of the ratio-based method does not change too much when $z$ is different. This is because we always need to find the action with the maximum contribution from the action ground set $\V$ in the corresponding method. This result shows the speed superiority of the proposed method.

\begin{figure}[!tbp]
	\centering
	\includegraphics[width=3in]{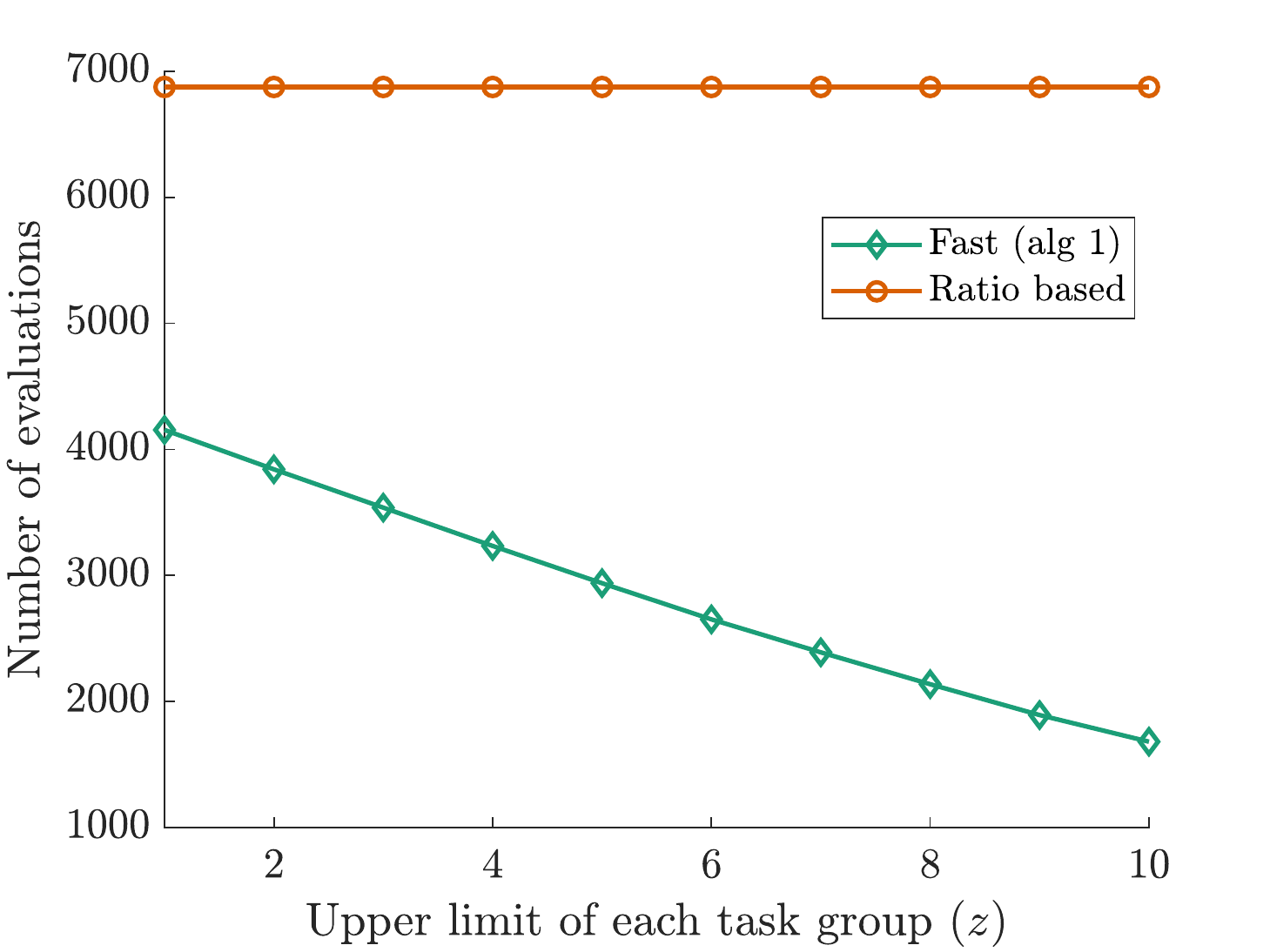}
	\caption{The number of objective function evaluations comparison between the proposed algorithm: ``Fast (alg 1)'' and the ratio based algorithm. Each averaged objective value is calculated based on $100$ trials.}
	\label{fig: num}
\end{figure}

%===============================================================================

\section{Conclusions}

This paper proposed a fast algorithm for solving the robust action selection problem for multi-agent systems. Through the proposed method, we can significantly increase the algorithm's speed. The proposed algorithm can be used when it is time-consuming to calculate the objective function value or the size of the ground set is large. Finally, through Monte Carlo simulation, we are able to evaluate the performance of the proposed algorithm through comparisons with a state-of-the-art algorithm.

%===============================================================================

\bibliography{arxiv.bib}
\bibliographystyle{IEEEtran}

\end{document}